\DeclareMathOperator{\E}{\mathbb{E}}
\shorttitle{Modeling electrical distribution networks with inhomogeneous Galton-Watson trees} 
\begin{document}

\title{Modeling electrical distribution networks with inhomogeneous Galton-Watson trees}

\authorone[Aalborg University]{Jakob G. Rasmussen} 
\authortwo[Aalborg University]{Troels Pedersen}
\authorthree[Aalborg University]{Rasmus L. Olsen}
\addressone{Dept. Mathematical Sciences, Aalborg University, Thomas Manns Vej 23, 9220 Aalborg East, Denmark.} 
\emailone{jgr@math.aau.dk} 
\addresstwo{Dept. Electronic Systems, Aalborg University, Fredrik Bajers Vej 7A, 9220 Aalborg East, Denmark.} 
\emailtwo{troels@es.aau.dk} 
\addressthree{Dept. Electronic Systems, Aalborg University, Fredrik Bajers Vej 7A, 9220 Aalborg East, Denmark.} 
\emailthree{rlo@es.aau.dk}

\begin{abstract}
In this paper we consider inhomogeneous Galton-Watson trees, and derive various moments for such processes: the number of vertices, the number of leaves, and the height of the tree. Also we make a simple condition of finiteness. We use these processes to model a data set consisting of electrical distribution networks, where we make a flexible framework for formulating models through the mean and variance of the offspring distributions. Furthermore, we introduce two mixture distributions as offspring distributions to reflect the particular form of the data. For estimation we use maximum likelihood estimation.
\end{abstract}

\keywords{Branching process; maximum likelihood estimation; moments; network topology; offspring distributions}

\ams{60J80}{62P30}    

\section{Introduction}

Galton-Watson processes are classic stochastic processes used for modeling reproducing populations as branching processes due to their mathematical tractability. Many generalizations have been made to adapt these to various problems, such as Galton-Watson processes with immigration, age dependent Galton-Watson processes, or Galton-Watson processes in random environments. In the present paper, we consider inhomogeneous Galton-Watson processes (also known as Galton-Watson processes with varying environments or Galton-Watson process with generation dependence). We focus explicitly on the tree structures generated by these branching processes, and to reflect this we use the terminology Inhomogeneous Galton-Watson Tree (IGWT) throughout the paper.

Various papers study criteria for identifying whether an IGWT is super-critical (i.e.\ is infinite with positive probability) or not, e.g.\ \cite{bertacchi16}, or calculate or bound the probability of an infinite tree for the super-critical case, e.g.\ \cite{broman08}. 
Furthermore, limits for the super-critical case are studied in many papers, e.g.\ \cite{Foster76,gao15}. Less focus have been given to the sub-critical case of the IGWT, although papers such as \cite{fearn72} proves results that are equally important for all cases of an IGWT. 

In the present paper, we  focus on modeling the topology of electrical distribution networks, and deriving results for the IGWT relevant to such modeling. While  topologies of medium- and high-voltage transmission grids typically  have loops for redundancy, low-voltage distribution grids are predominantly tree-structures. For this reason we restrict our attention to low-voltage grids.
Furthermore, since distribution networks are small compared to other use-cases of Galton-Watson processes and their generalizations, we  focus on the sub-critical case. Finally, the topology of such networks are known (ignoring possible missing or faulty information), so the methods assume data where the tree structures are completely known. 

Other papers have studied different stochastic models for generating graphs representing the topology of electrical networks, for example: \cite{aksoy2019} considers a two-phase model with a Chung-Le random graph model for generating subgraphs and a star graph generation algorithm for connecting the subgraphs; \cite{giacomarra24} uses exponential graph models; \cite{shahraeini24} uses modifications of the Erd\"{o}s-R\'{e}ney model; \cite{ma17} considers random spanning trees and adds extra random edges afterwards; and \cite{schultz2014random}  employs a random tree growth model including spatial coordinates. However, all of these papers focus on models with loops, sometimes specifically aimed at medium- or high-voltage grids, while we are not aware of papers specifically focusing on making stochastic models for generating topologies of low-voltage grids. 

The data used in the present paper contain the topologies of 100 Danish electrical low voltage distribution networks from various places in the operational area of a single distribution system operator, ranging from rural farm areas to medium sized town areas, see \cite{dataset}. Three of the topologies are shown in Figure~\ref{fig.3grids}. It should be obvious from just these three examples that the networks vary quite a lot. For example, the number of vertices, height and width of the three trees differ massively. IGWTs similarly has an inherent large variation due to the branching structure, thus motivating the use of these as models for this particular data. 

\begin{figure}
    \centering
    \includegraphics[width=0.9\linewidth]{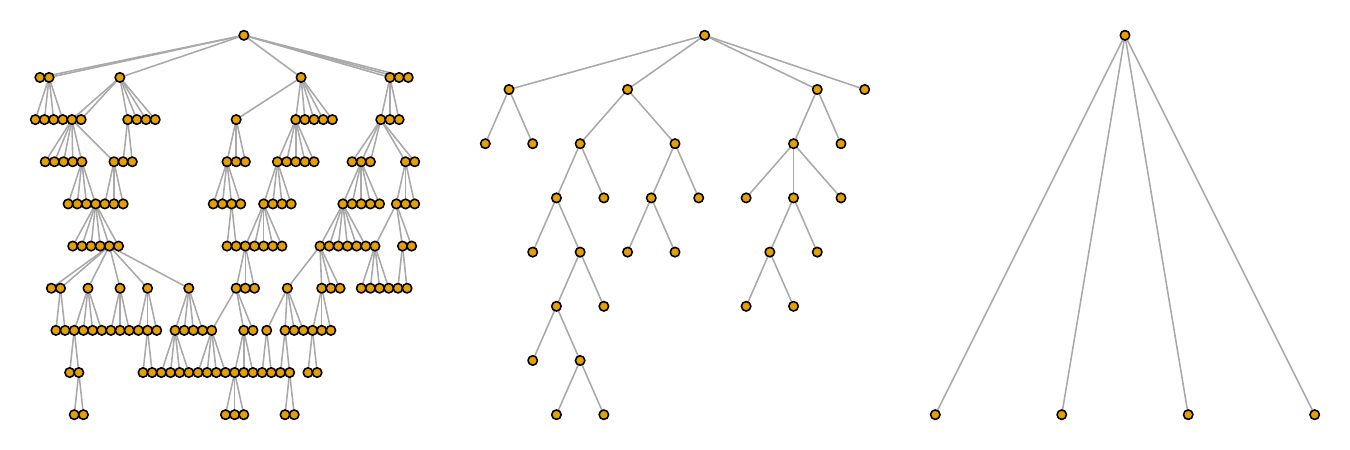}
    \caption{Three examples of grids from the data.}
    \label{fig.3grids}
\end{figure}

From an application point-of-view,  the model proposed in this paper can be used to easily simulate a large number of synthetic electrical distribution networks. Synthetic networks are needed for designing and testing smart grid solutions targeting the low voltage distribution grids. Currently such methods are proposed and tested in simulation studies on a few networks, which are either handcrafted test scenarios or real-life networks. As an example, the work \cite{olsen24} considers the estimation of voltages and currents on a single real-life distribution network. Other examples include voltage control systems or other smart grid solutions targeting the low voltage distribution grids, see e.g.\ \cite{Hassani04052025,KRISTENSEN2018122}, where optimal solutions are usually linked to the topology. Testing on a few hand-picked or hand-crafted networks can be problematic for several reasons. It is hard to say whether obtained numerical results would differ significantly for other networks. Furthermore the test approach can be criticized since a topology used may be picked or constructed to give an advantage or disadvantage in a performance analysis. Thus ideally, the test should be performed for many different networks, but handcrafting large numbers of grid topologies is not practical. Relying on real-life data from already existing grids would impede the ability to test performance in future grids.  Using a stochastic model such assessments may even be performed for different parameter settings of the model to assess how smart grid solutions will cope with potential future changes in the network topology.  


The paper is structured as follows: Section~\ref{sec.igwt} defines the IGWT and establishes important properties including various moments, probability generating functions and a simple sufficient condition for finiteness. Section~\ref{sec.models} introduces a modeling framework for constructing IGWT models for modeling data, and considers a few offspring distributions explicitly tailored to fit the distribution network data. The likelihood function is derived in Section~\ref{sec.mle}. The model is specified and fitted to the data in Section~\ref{sec.analysis}, and the fit of the model is investigated. Finally, concluding remarks are given in Section~\ref{sec.con}

\section{IGWTs and their properties}\label{sec.igwt}

\subsection{Definition}

The IGWT is defined recursively one generation at a time. Let $n=0,1,\ldots$ denote the generation number. To each generation associate a distribution, the so-called offspring distribution, with probability function $f_n$ defined on the non-negative integers. Assume that $f_n$ has mean $\mu_n$ and variance $\sigma_n^2$.
\begin{definition}\label{def.igwt}
The IGWT is given by the following recursive construction.
    \begin{enumerate}
        \item Generation 0 consists of one vertex.
        \item For $n=0,1,\ldots$, each vertex $i$ in generation $n$ has $X_{n,i}$ offspring drawn from $f_n$ independently of everything else.
        \item This ends when there no new offspring in a generation, and the output is then the obtained directed tree.
    \end{enumerate}
\end{definition}

Since the model is defined constructively in Definition~\ref{def.igwt}, it is easy to simulate it according to this definition (provided of course that a simulation algorithm is available for the distribution specified by $f_n$). Denote the total number of vertices in generation $n$ by $Z_n$ and the total number of vertices in the tree by $Z$. In general $Z$ may become infinite with positive probability, but we address this issue in Section~\ref{sec.finite}. 

\subsection{Moments of the number of vertices}

Denote the following moments of an IGWT by $m_n=\E(Z_n)$, $s_n^2=\var(Z_n)$, $c_{n,n+k}=\cov(Z_{n},Z_{n+k})$, $m=\E(Z)$, and $s^2=\var(Z)$. The following proposition then gives formulas for these, where a sum or a product over an empty index set should be interpreted as zero and one, respectively (i.e.\ $m_0=1$, $s_0^2=0$, $c_{0,k}=0$ for $k>0$ in the proposition). It should be noted that the expressions for the mean and variance of $Z_n$, i.e.\ \eqref{eq.meann} and \eqref{eq.varn}, are well-known and merely included for the sake of completeness (see e.g.\ Propositions~4 and 6 in \cite{fearn72}).

\begin{proposition}\label{prop.momn} The first and second moments of $Z_n$ for $n\geq 0$ are given by 
    \begin{align}
        m_n &= \prod_{i=0}^{n-1} \mu_i,\label{eq.meann}\\
        s_n^2 &= \sum_{i=0}^{n-1} \sigma_i^2 \left(\prod_{j=0}^{i-1} \mu_j\right)\left(\prod_{j=i+1}^{n-1} \mu_j^2\right),\label{eq.varn}\\
        c_{n,n+k} &= s_n^2 \frac{m_{n+k}}{m_n},\qquad \text{for }k>0.\label{eq.covnk}
    \end{align}
    Furthermore, the first and second moments of $Z$ are given by
    \begin{align}
		m &= \sum_{n=0}^\infty \prod_{i=0}^{n-1} \mu_i,\label{eq.meanall}\\
  		s^2 &= \sum_{n=1}^\infty \left(s_n^2 + 2\frac{s_n^2}{m_n}\sum_{k=1}^\infty m_{n+k}\right).\label{eq.varall}
    \end{align}

\end{proposition}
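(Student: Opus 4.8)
The plan is to work one generation at a time, conditioning on the $\sigma$-algebra $\mathcal{F}_n$ generated by the first $n$ generations. The key observation is that, given $\mathcal{F}_n$, the count $Z_{n+1}$ is a sum of $Z_n$ i.i.d.\ copies of $f_n$, so $\E(Z_{n+1}\mid\mathcal{F}_n)=\mu_n Z_n$ and $\var(Z_{n+1}\mid\mathcal{F}_n)=\sigma_n^2 Z_n$. Taking expectations gives $m_{n+1}=\mu_n m_n$, which with $m_0=1$ yields \eqref{eq.meann} by induction. The law of total variance gives $s_{n+1}^2=\E(\var(Z_{n+1}\mid\mathcal{F}_n))+\var(\E(Z_{n+1}\mid\mathcal{F}_n))=\sigma_n^2 m_n+\mu_n^2 s_n^2$, and solving this linear recursion with $s_0^2=0$ (again by induction on $n$) produces \eqref{eq.varn}, after using $m_i=\prod_{j=0}^{i-1}\mu_j$ to rewrite $\sigma_i^2 m_i\prod_{j=i+1}^{n-1}\mu_j^2$ in the stated form.

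For the covariance \eqref{eq.covnk}, fix $n$ and induct on $k\geq 1$. Conditioning on $\mathcal{F}_{n+k-1}$ and using $\E(Z_{n+k}\mid\mathcal{F}_{n+k-1})=\mu_{n+k-1}Z_{n+k-1}$ gives $\E(Z_n Z_{n+k})=\mu_{n+k-1}\E(Z_n Z_{n+k-1})$, hence $\E(Z_n Z_{n+k})=(m_{n+k}/m_n)\,\E(Z_n^2)$. Subtracting $m_n m_{n+k}$ and inserting $\E(Z_n^2)=s_n^2+m_n^2$ yields \eqref{eq.covnk}. (If some $\mu_i=0$ then $Z_n=0$ a.s.\ for $n>i$ and both sides vanish, so we may assume all $\mu_i>0$.)

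For the moments of $Z=\sum_{n=0}^\infty Z_n$: since $Z_n\geq 0$, monotone convergence gives $m=\E(Z)=\sum_{n=0}^\infty m_n$, which is \eqref{eq.meanall} (with $m=\infty$ if the series diverges). For the variance, put $S_N=\sum_{n=0}^N Z_n$, so $S_N\uparrow Z$ and $S_N^2\uparrow Z^2$; monotone convergence then gives $\E(S_N)\to\E(Z)$ and $\E(S_N^2)\to\E(Z^2)$, hence $\var(S_N)\to\var(Z)$. Writing $\var(S_N)=\sum_{n=1}^N s_n^2+2\sum_{1\leq n<n'\leq N}c_{n,n'}$ (the index-$0$ terms drop because $Z_0=1$ is deterministic) and letting $N\to\infty$, using \eqref{eq.covnk} and Tonelli's theorem to rearrange the nonnegative double sum, gives $s^2=\sum_{n=1}^\infty s_n^2+2\sum_{n=1}^\infty\sum_{k=1}^\infty s_n^2 m_{n+k}/m_n$, which is \eqref{eq.varall}.

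The main obstacle is the final identity \eqref{eq.varall}: it is an iterated infinite sum resting on all the earlier formulas, and one must justify both the interchange of limit and expectation in $\var(S_N)\to\var(Z)$ and the rearrangement of the double series. Both are handled purely by nonnegativity, via monotone convergence and Tonelli, so no integrability hypotheses beyond those implicit in the statement are needed; everything else reduces to the standard one-step conditioning argument for branching processes.
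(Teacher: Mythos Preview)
Your proof is correct and follows essentially the same conditioning-on-generations strategy as the paper; the only minor differences are that you supply the inductive proofs of \eqref{eq.meann} and \eqref{eq.varn} which the paper cites from \cite{fearn72}, and for \eqref{eq.covnk} you peel off one generation at a time via $\mathcal{F}_{n+k-1}$ whereas the paper jumps directly from generation $n$ to $n{+}k$ by writing $Z_{n+k}=\sum_{i=1}^{Z_n} Z_{n+k}^{*,i}$ as a random sum and applying the law of total covariance in one step. Your added care with monotone convergence and Tonelli for \eqref{eq.meanall}--\eqref{eq.varall} is more explicit than the paper's brief ``usual formulas for sums'' remark, but the substance is the same.
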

\begin{proof}

    To prove \eqref{eq.covnk}, consider a sum of a random number $N$ of i.i.d.\ random variables $Y_i$, where $N$ is independent of $Y_i$. Then by the law of total covariance
    \begin{align*}
    \cov\left(N,\sum_{i=1}^N Y_i\right) &= \E\left(\cov\left(\left.N,\sum_{i=1}^N Y_i\right|N\right)\right) + \cov\left(\E(N|N),\E\left(\left.\sum_{i=1}^N Y_i\right|N\right)\right)\\
    &= 0 + \cov(N,N\E(Y_i))\\
    &= \var(N)\E(Y_i).
    \end{align*}
    Thus 
    \[
    \cov(Z_n,Z_{n+k}) = \var(Z_n)\E(Z^*_{n+k}),
    \]
    where $Z_{n+k}^*$ denotes the number of vertices in generation $n+k$ coming from a single vertex in generation $n$. Therefore
    \[
    \cov(Z_n,Z_{n+k}) = s_n^2 \prod_{i=n}^{n+k-1}\mu_i = s_n^2 \frac{m_{n+k}}{m_n}.
    \]
    The mean and variance of $Z$ given by \eqref{eq.meanall} and \eqref{eq.varall} follow immediately from the usual formulas for means and variances of sums noting that
    all the terms for $n=0$ are 0 and thus disappears in the variance.
\end{proof}

\subsection{Moments of the number of leaves}

Rather than considering all vertices, we can also consider the leaves only. In the context of distribution networks, this corresponds to consumers. Letting $p_n=P(X_{n,i}=0)$, we get various moments of the number of leaves denoted by $\tilde m_n=\E(\tilde Z_n)$, $\tilde s_n^2 = \var(\tilde Z_n)$, $\tilde c_{n,n+k} = \cov(\tilde Z_n,\tilde Z_{n+k})$, $\tilde m = \E(\tilde Z)$, and $\tilde s^2 = \var(\tilde Z)$, where $\tilde Z_n$ and $\tilde Z$ denote the number of vertices without offspring in generation $n$ and in the tree, respectively.
\begin{proposition}
The first and second moments of $\tilde Z_n$ are given by 
    \begin{align}
        \tilde m_n &= m_n p_n,\label{eq.tmeann}\\
        \tilde s_n^2 &= s_n^2p_n^2 + m_n p_n (1-p_n),\label{eq.tvarn}\\
        \tilde c_{n,n+k} &= p_{n+k}p_nm_{n+k}\left(\frac{s_n^2}{m_n}-1\right) ,\qquad \text{for }k>0.\label{eq.tcovnk}
    \end{align}
    Furthermore, the first and second moments of $\tilde Z$ are given by
    \begin{align}
		\tilde m &= \sum_{n=0}^\infty m_np_n,\label{eq.tmeanall}\\
  		\tilde s^2 &= \sum_{n=0}^\infty p_n\left(m_n + \left(\frac{s_n^2}{m_n}-1\right)\left(m_np_n+2\sum_{k=1}^\infty p_{n+k}m_{n+k} \right)\right)\label{eq.tvarall}
    \end{align}
\end{proposition}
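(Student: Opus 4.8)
The plan is to reduce everything to the moments of $Z_n$ from Proposition~\ref{prop.momn} by conditioning, first on $Z_n$ and then on the decomposition of later generations by ancestry.

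First I would record the elementary fact that, conditionally on $Z_n$, the count $\tilde Z_n$ is a sum of $Z_n$ independent $\mathrm{Bernoulli}(p_n)$ indicators: vertex $i$ of generation $n$ is a leaf exactly when $X_{n,i}=0$, an event of probability $p_n$, and these events are independent across the generation by Definition~\ref{def.igwt}. Thus $\tilde Z_n\mid Z_n\sim\mathrm{Bin}(Z_n,p_n)$, so \eqref{eq.tmeann} follows from the tower property and \eqref{eq.tvarn} from the law of total variance,
\[
\var(\tilde Z_n)=\E\!\left[\var(\tilde Z_n\mid Z_n)\right]+\var\!\left(\E[\tilde Z_n\mid Z_n]\right)=\E[Z_np_n(1-p_n)]+\var(Z_np_n)=m_np_n(1-p_n)+p_n^2s_n^2 .
\]

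The covariance \eqref{eq.tcovnk} is the crux. Here I would split generation $n+k$ by ancestry, writing $\tilde Z_{n+k}=\sum_{i=1}^{Z_n}\tilde Z_{n+k}^{(i)}$ where $\tilde Z_{n+k}^{(i)}$ is the number of generation-$(n+k)$ leaves descending from the $i$th vertex of generation $n$, and likewise $\tilde Z_n=\sum_{i=1}^{Z_n}\mathbf 1\{X_{n,i}=0\}$. Conditionally on $Z_n$ the subtrees rooted at the generation-$n$ vertices are i.i.d., hence so are the pairs $\bigl(\mathbf 1\{X_{n,i}=0\},\tilde Z_{n+k}^{(i)}\bigr)$ as $i$ ranges over $1,\dots,Z_n$; the cross terms in the conditional covariance vanish, and the law of total covariance gives
\[
\cov(\tilde Z_n,\tilde Z_{n+k})=m_n\,\cov\!\bigl(\mathbf 1\{X=0\},\tilde Z^{(1)}_{n+k}\bigr)+\cov\!\bigl(\E[\tilde Z_n\mid Z_n],\,\E[\tilde Z_{n+k}\mid Z_n]\bigr),\qquad X\sim f_n .
\]
The key simplification is that for $k\ge1$ a leaf of generation $n$ has no descendants in generation $n+k$, so $\mathbf 1\{X=0\}\,\tilde Z^{(1)}_{n+k}=0$ identically; hence $\cov(\mathbf 1\{X=0\},\tilde Z^{(1)}_{n+k})=-p_n\,\E[\tilde Z^{(1)}_{n+k}]=-p_np_{n+k}m_{n+k}/m_n$, where I use that one generation-$n$ vertex has on average $m_{n+k}/m_n=\prod_{i=n}^{n+k-1}\mu_i$ descendants in generation $n+k$, each of which is a leaf with probability $p_{n+k}$. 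Since $\E[\tilde Z_n\mid Z_n]=Z_np_n$ and $\E[\tilde Z_{n+k}\mid Z_n]=Z_np_{n+k}m_{n+k}/m_n$, the second term equals $p_np_{n+k}(m_{n+k}/m_n)\var(Z_n)=p_np_{n+k}m_{n+k}s_n^2/m_n$, and summing the two terms yields \eqref{eq.tcovnk}.

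Finally, \eqref{eq.tmeanall} is immediate from $\tilde Z=\sum_{n\ge0}\tilde Z_n$ and \eqref{eq.tmeann}, and \eqref{eq.tvarall} follows from $\tilde s^2=\sum_{n\ge0}\tilde s_n^2+2\sum_{n\ge0}\sum_{k\ge1}\tilde c_{n,n+k}$ by inserting \eqref{eq.tvarn} and \eqref{eq.tcovnk} and collecting the common factor $p_n$: the diagonal contributes $p_n\bigl(p_ns_n^2+m_n(1-p_n)\bigr)$ and the off-diagonal contributes $2p_n(s_n^2/m_n-1)\sum_{k\ge1}p_{n+k}m_{n+k}$, whose sum is exactly the summand in \eqref{eq.tvarall}. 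I expect the ancestry-decomposition step for the covariance to be the only genuine obstacle, in particular making precise that, given $Z_n$, the contributions $\bigl(\mathbf 1\{X_{n,i}=0\},\tilde Z_{n+k}^{(i)}\bigr)$ are i.i.d.\ and that the indicator and the descendant count attached to the same vertex multiply to zero; the remainder is bookkeeping, modulo the tacit assumption (cf.\ Section~\ref{sec.finite}) that the series defining $\tilde m$ and $\tilde s^2$ converge.
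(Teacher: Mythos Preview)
Your proof is correct and follows the same overall scheme as the paper: binomial thinning $\tilde Z_n\mid Z_n\sim\mathrm{Bin}(Z_n,p_n)$ for \eqref{eq.tmeann}--\eqref{eq.tvarn}, the law of total covariance conditioning on $Z_n$ for \eqref{eq.tcovnk}, and straightforward summation for \eqref{eq.tmeanall}--\eqref{eq.tvarall}.

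The one place where you diverge from the paper is in evaluating the conditional covariance $\cov(\tilde Z_n,\tilde Z_{n+k}\mid Z_n)$. The paper conditions further on $\tilde Z_n$ and introduces the auxiliary variable $\tilde Z_{n+k}^*$ (leaves at generation $n+k$ from a single generation-$n$ vertex \emph{conditioned on having at least one offspring}), which brings in a factor $1/(1-p_n)$ that cancels after several lines of algebra. Your ancestry decomposition $\tilde Z_n=\sum_i\mathbf 1\{X_{n,i}=0\}$, $\tilde Z_{n+k}=\sum_i\tilde Z_{n+k}^{(i)}$ together with the observation that $\mathbf 1\{X_{n,i}=0\}\,\tilde Z_{n+k}^{(i)}\equiv 0$ reaches the same intermediate result $\cov(\tilde Z_n,\tilde Z_{n+k}\mid Z_n)=-p_np_{n+k}(m_{n+k}/m_n)Z_n$ in one line and bypasses the conditioning on nonextinction entirely. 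This is a cleaner route to the same destination; the rest of the argument coincides with the paper.
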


\begin{proof}
    Observe that $\tilde Z_n$ given $Z_n$ is binomially distributed with parameters $Z_n$ and $p_n$, and thus $\E(\tilde Z_n|Z_n) = Z_n p_n$ and $\var(\tilde Z_n|Z_n) = Z_np_n(1-p_n)$. Therefore the law of total expectation gives \eqref{eq.tmeann} by
    \[
    \tilde m_n = \E(\E(\tilde Z_n|Z_n)) = \E(Z_np_n) = m_n p_n,
    \]
    and the law of total variance gives \eqref{eq.tvarn} by
    \[
    \var(\tilde Z_n) = \var(\E(\tilde Z_n|Z_n)) + \E(\var(\tilde Z_n|Z_n) = \var(Z_np_n) + \E(Z_np_n(1-p_n)) = s_n^2p_n^2 + m_n p_n (1-p_n).
    \]
    To get \eqref{eq.tcovnk}, we start with the law of total covariance, which gives
    \begin{equation}\label{eq.lotc}
    \cov(\tilde Z_n,\tilde Z_{n+k}) = \E(\cov(\tilde Z_n, \tilde Z_{n+k}|Z_n)) + \cov(\E(\tilde Z_n|Z_n),\E(\tilde Z_{n+k}|Z_n)).
    \end{equation}
    We will start by considering the conditional covariance in the first term on the right hand side, which can be expanded as
    \begin{equation}\label{eq.lotcterm1}
    \cov(\tilde Z_n, \tilde Z_{n+k}|Z_n) = \E(\tilde Z_n\tilde Z_{n+k}|Z_n) - \E(\tilde Z_n|Z_n)\E(\tilde Z_{n+k}|Z_n).
    \end{equation}
    The last expectation on the right hand side of \eqref{eq.lotcterm1} is given by
    \begin{equation}\label{eq.lotcterm1-1}
    \E(\tilde Z_{n+k}|Z_n) = \E(\E(\tilde Z_{n+k}|Z_{n+k})|Z_n) = \E(p_{n+k}Z_{n+k}|Z_n) =  p_{n+k}\frac{m_{n+k}}{m_n} Z_n,
    \end{equation}
    where the last equality follows from the fact that we can regard $Z_{n+k}|Z_n$ as $Z_n$ independent IGWTs started at generation $n$. The first expectation on the right hand side of \eqref{eq.lotcterm1} is given by
    \[
    \E(\tilde Z_n\tilde Z_{n+k}|Z_n) = \E(\tilde Z_n \E(\tilde Z_{n+k}|\tilde Z_n)|Z_n).
    \]
    Conditioning on $\tilde Z_n$ and $Z_n$ means that we essentially know how many processes started at generation $n$ becomes leaves (i.e.\ $\tilde Z_n$) and how many continues with at least one offspring, and thus potentially can give leaves at generation $n+k$ (i.e.\ $Z_n-\tilde Z_n)$. If we denote the number of leaves at generation $n+k$ from a single process started at generation $n$ conditioned on at least one offspring by $\tilde Z_{n+k}^*$, then we get
    \begin{align}\label{eq.lotcterm1-2}
    \E(\tilde Z_n\tilde Z_{n+k}|Z_n) &= \E(\tilde Z_n (Z_n-\tilde Z_n)\E(\tilde Z_{n+k}^*|\tilde Z_n)|Z_n)\nonumber\\ &= \E\left(\tilde Z_n (Z_n-\tilde Z_n) p_{n+k} \frac{m_{n+k}}{m_n}\frac{1}{1-p_n}\Big|Z_n\right)\nonumber\\
    &= p_{n+k} \frac{m_{n+k}}{m_n}\frac{1}{1-p_n} \left(Z_n\E(\tilde Z_n|Z_n) - \var(\tilde Z_n|Z_n)-\E(\tilde Z_n|Z_n)^2\right)\nonumber\\
    &= p_{n+k} \frac{m_{n+k}}{m_n}\frac{1}{1-p_n} \left(Z_n^2p_n - Z_np_n(1-p_n) - Z_n^2p_n^2\right)\nonumber\\
    &= p_{n+k} \frac{m_{n+k}}{m_n}p_n Z_n.
    \end{align}
    Inserting \eqref{eq.lotcterm1-1} and \eqref{eq.lotcterm1-2} into \eqref{eq.lotcterm1}, we get that
    \begin{equation}\label{eq.lotcterm1-3}
    \cov(\tilde Z_n, \tilde Z_{n+k}|Z_n) = p_{n+k} \frac{m_{n+k}}{m_n}p_n Z_n - Z_np_np_{n+k}\frac{m_{n+k}}{m_n}Z_n = -p_{n+k}p_n\frac{m_{n+k}}{m_n} Z_n.
    \end{equation}
    Taking the expectation, we finally get that the first term on the right hand side of \eqref{eq.lotc} is given by
    \begin{equation}\label{eq.lotcterm1-4}
    \E\left(\cov(\tilde Z_n, \tilde Z_{n+k}|Z_n)\right) = \E\left(-p_{n+k}p_n\frac{m_{n+k}}{m_n} Z_n\right) = -p_{n+k}p_nm_{n+k}.
    \end{equation}
    Using \eqref{eq.lotcterm1-1}, we get that the second term on the right hand side of \eqref{eq.lotc} is given by 
    \begin{equation}\label{eq.lotcterm2}
    \cov\left(\E(\tilde Z_n|Z_n),\E(\tilde Z_{n+k}|Z_n)\right) = \cov\left(Z_np_n,p_{n+k}\frac{m_{n+k}}{m_n}Z_n\right) = p_np_{n+k}\frac{m_{n+k}}{m_n}s_n^2.
    \end{equation}
    Combining \eqref{eq.lotcterm1-4} and \eqref{eq.lotcterm2} and simplifying, we get \eqref{eq.tcovnk}.
    Finally, \eqref{eq.tmeanall} and \eqref{eq.tvarall} follow by the usual formulas for means and variances of sums, followed by simplifying the expressions.
\end{proof}

\subsection{Probability generating functions and moments on tree height}

Denote the probability generating function (pgf) corresponding to the offspring distribution $f_n$ by $g_n(s) = \sum_{i=0}^\infty s^i f_n(i)$ for $s\in[0,1]$. Then Proposition~1 in \cite{fearn72} shows that we can obtain the pgf for the number of vertices of generation $n=1,2,\ldots$ by  
    \[
	G_n(s) = g_0 \circ g_1 \circ \dots \circ g_{n-1}(s),
	\]
where $\circ$ denotes composition. From this we can obtain the cumulative distribution function for the highest generation number of a vertex in an IGWT (i.e.\ the tree height not counting generation 0) as
	\[
		P(N_{\max}\leq n) = P(Z_{n+1}=0) = G_{n+1}(0),
	\]
and from this we also have the mean and variance given by
\begin{align*}
\E(N_{\max}) &= \sum_{n=0}^\infty \left(1-G_{n+1}(0)\right), \\
\var(N_{\max}) &= \left(\sum_{n=0}^\infty (2n+1)\left(1-G_{n+1}(0)\right)\right) - \left(\sum_{n=0}^\infty \left(1-G_{n+1}(0)\right)\right)^2.
\end{align*}
Note that $G_n$ can only be derived analytically in very simple cases, but it can easily be obtained numerically.


\subsection{Sufficient condition for finiteness}\label{sec.finite}

In order to create a realistic model for distribution networks, it is natural to require that the model generates networks with a finite number of vertices, something that is not ensured by Definition~\ref{def.igwt}. There are various papers (e.g.\ \cite{agresti75} and \cite{fearn81}) considering different criteria for finiteness of inhomogeneous Galton-Watson processes, but here we focus on a simple and sufficient (but not necessary) condition for finiteness, which is adequate to cover most relevant cases.

\begin{proposition}\label{prop.finite}
    If there exist some $\tilde n$ and $c$ such that $\mu_n\leq c<1$ for all $n>\tilde n$, then $m<\infty$, and hence the number of vertices are finite with probability one.
\end{proposition}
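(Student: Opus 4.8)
The plan is to bound the tail of the series for $m$ in \eqref{eq.meanall} by a convergent geometric series. Recall that $m = \sum_{n=0}^\infty m_n$ where $m_n = \prod_{i=0}^{n-1}\mu_i$. Splitting the sum at $\tilde n$, the finite head $\sum_{n=0}^{\tilde n} m_n$ contributes a finite amount regardless of the values of $\mu_0,\ldots,\mu_{\tilde n}$, so the only issue is the tail $\sum_{n>\tilde n} m_n$.

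For the tail, I would write, for $n > \tilde n$,
\[
m_n = \prod_{i=0}^{n-1}\mu_i = \left(\prod_{i=0}^{\tilde n}\mu_i\right)\left(\prod_{i=\tilde n+1}^{n-1}\mu_i\right) \leq m_{\tilde n+1}\, c^{\,n-1-\tilde n},
\]
using the hypothesis $\mu_i \leq c$ for all $i > \tilde n$ (and noting $m_{\tilde n+1} = \prod_{i=0}^{\tilde n}\mu_i$ is a fixed finite constant). Summing over $n > \tilde n$ gives
\[
\sum_{n=\tilde n+1}^\infty m_n \leq m_{\tilde n+1}\sum_{n=\tilde n+1}^\infty c^{\,n-1-\tilde n} = m_{\tilde n+1}\sum_{j=0}^\infty c^{\,j} = \frac{m_{\tilde n+1}}{1-c} < \infty,
\]
since $0 \le c < 1$. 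Hence $m = \sum_{n=0}^{\tilde n} m_n + \sum_{n=\tilde n+1}^\infty m_n < \infty$.

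Finally, to conclude that the total number of vertices $Z$ is finite with probability one, I would note that $Z$ is a non-negative integer-valued random variable with $\E(Z) = m < \infty$, which forces $P(Z < \infty) = 1$; equivalently, $Z = \sum_n Z_n \ge 0$ and a non-negative random variable with finite mean is almost surely finite. There is essentially no hard step here — the argument is a routine geometric-series tail bound — the only mild subtlety is being careful that one must allow a finite exceptional initial segment (the $\mu_n$ for $n \le \tilde n$ may be arbitrarily large, even infinite is not permitted but large is fine) and that the constant $m_{\tilde n+1}$ is absorbed harmlessly. I would also remark, as the proposition already states, that the condition is sufficient but not necessary, so no converse is attempted.
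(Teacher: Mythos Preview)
Your proof is correct and follows essentially the same route as the paper: split the series for $m$ at $\tilde n$, bound the tail by a geometric series with ratio $c<1$, and conclude almost-sure finiteness from $\E(Z)<\infty$. Your index bookkeeping (factoring out $m_{\tilde n+1}=\prod_{i=0}^{\tilde n}\mu_i$ so that only the factors $\mu_i$ with $i>\tilde n$ are bounded by $c$) is in fact slightly cleaner than the paper's version.
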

\begin{proof}
    Assuming that $\tilde n$ and $c$ exist such that $\mu_n\leq c<1$ for all $n>\tilde n$, we get from \eqref{eq.meanall} that
    \begin{align*}
    m &= \sum_{n=0}^\infty \prod_{i=0}^{n-1} \mu_i\\ &\leq \sum_{n=0}^{\tilde n} \prod_{i=0}^{n-1} \mu_i + \left(\prod_{i=0}^{\tilde{n}-1}\mu_i\right) \sum_{n=\tilde{n}+1}^\infty  c^{n-(\tilde n +1)}\\
    &= \sum_{n=0}^{\tilde n} \prod_{i=0}^{n-1} \mu_i + \left(\prod_{i=0}^{\tilde{n}-1}\mu_i\right) \frac{1}{1-c},
    \end{align*}
    which is finite.
\end{proof}

Note that changing the inequalities in Proposition~\ref{prop.finite} to get $\mu_n\geq c\geq1$ for all $n>\tilde n$, we can similarly prove that $m=\infty$ (provided no $\mu_i=0$). While this does not necessarily mean that there is a positive probability for obtaining an infinite network, it still means that we will occasionally get unrealistically massive networks, suggesting that a model fulfilling this will not be realistic.

\section{Statistical models based on IGWTs}\label{sec.models}

\subsection{Modeling framework}\label{sec.modframe}

In order to formulate a specific IGWT, we only need to specify the offspring distributions $f_n$. Since so many results only rely on means and variances, we will restrict ourselves to considering $f_n$ that can be parametrised uniquely by its mean and variance, and express the dependence of $f_n$ on $n$ through two functions expressing the mean and variance structures. With a slight abuse of notation, we write
\begin{equation}\label{eq.meanvarstruc}
\mu_n = \mu_n(\phi), \qquad \sigma_n^2 = \sigma_n^2(\psi),
\end{equation}
where $\phi$ and $\psi$ are parameter vectors.  Although \eqref{eq.meanvarstruc} allows for the mean and variance structures to be set independently by selecting non-negative functions, not all settings provide valid mean and variance combinations; we will return to this issue in Section~\ref{sec.offspring}.

\begin{example}
    We give a few examples of possible mean and variance structures that we will use in the analysis in Section~\ref{sec.analysis}; we write them as mean structures, but obviously they can also be used as variance structures:
    \begin{align} 
    \mu_n(\phi) &= \phi_1 \phi_2^n, \qquad \phi = (\phi_1,\phi_2) \in [0,\infty)^2,\label{eq.exp}\\
    \mu_n(\phi) &= \begin{cases} \phi_1, &\quad n=0 \\ \phi_2 \phi_3^n, &\quad n>0\end{cases},
    \qquad \phi = (\phi_1,\phi_2,\phi_3) \in [0,\infty)^3,\label{eq.aexp}\\
    \mu_n(\phi) &= \begin{cases} \phi_1, &\quad n=0 \\ \phi_2 \phi_3^n+\phi_4, &\quad n>0\end{cases},
    \qquad \phi = (\phi_1,\phi_2,\phi_3,\phi_4) \in [0,\infty)^4.\label{eq.aexppc}
    \end{align}
    Note that by Proposition~\ref{prop.finite} in order to ensure that the resulting network is finite with probability one, the parameters should be restricted. For example, $\phi_2<1$ for \eqref{eq.exp}, $\phi_3<1$ for \eqref{eq.aexp}, and $\phi_3<0$ and $\phi_4<1$ for \eqref{eq.aexppc} imply finiteness.
\end{example}

\subsection{Offspring distributions}\label{sec.offspring}

In addition to specifying the mean and variance structure, we also need to specify the offspring distributions for each generation. We assume that we fix the class of distribution for each generation (either with the same class or different classes of distributions), but let its parameters depend on the generation. However, to ease the notation in this section, we suppress the dependence of all parameters on the generation, i.e.\ we do not include the subscript $n$ on the parameters.

Any distribution on the non-negative integers may be used as offspring distribution. However, the modeling framework in Section~\ref{sec.modframe} requires the distribution to be parametrized by its mean and variance. This can be a one-parameter distribution, such as a Poisson or geometric distribution, or something even simpler, such as a distribution on $\{0,2\}$. In this case we should only specify one of the mean or the variance structures. Or it could be a two-parameter distribution, such as a binomial or negative binomial distribution.

However, for the data in Section~\ref{sec.analysis}, we need the distribution to reflect the fact that no vertices in the data have exactly one offspring. One simple option for making a distribution on $\mathbb N_0\backslash\{1\}$ is to make a mixture between 0 and a discrete distribution on $\mathbb{N}_0$ translated by 2. With a suitable choice of a standard one-parameter discrete distribution, we get a two-parameter mixture distribution, which can be parametrized by its mean and variance. Two examples follow:

\begin{example}[Poisson-zero mixture model]\label{ex.pois}
    One simple option is to use the Poisson distribution translated by 2 in the mixture model. We define this by the probability function 
    \begin{equation*}
        f(x; p,\lambda) = \begin{cases}
            p, & x=0\\
            0, & x=1\\
            (1-p)\frac{e^\lambda \lambda^{x-2}}{(x-2)!}, & x\geq 2
        \end{cases} 
    \end{equation*}
    with parameters $p\in[0,1]$ and $\lambda>0$. By the usual formulas for obtaining mean and variance for mixture distributions, the mean and variance are given by
    \begin{equation}\label{eq.poismom}
    \mu = (1-p)(\lambda+2),\qquad 
    \sigma^2 = (1-p) \left(\lambda+p(\lambda+2)^2\right).
    \end{equation}
    Isolating $p$ and $\lambda$ from these through straightforward, but tedious, calculations gives
    \begin{equation}\label{eq.poistrans}
    \lambda = \frac{1}{2} \left(\frac{\sigma^2}{\mu}+\mu-1 + \sqrt{\left(1-\frac{\sigma^2}{\mu}-\mu\right)^2+8}\right)-2, \qquad p = 1-\frac{\mu}{\lambda+2}.
    \end{equation}
    The pgf for the offspring distribution is given by
    \[
    g(s) = p + (1-p)s^2 e^{\lambda(s-1)}.
    \]
\end{example}

\begin{example}[Geometric-zero mixture model]\label{ex.geom}
    Similarly to Example~\ref{ex.pois} we can make a mixture of 0 and a geometric distribution translated by 2. This is defined by
    \begin{equation*}
        f(x; p,q) = \begin{cases}
            p, & x=0\\
            0, & x=1\\
            (1-p)(1-q)^{x-2}q, & x\geq 2
        \end{cases} 
    \end{equation*}
    with parameters $p,q\in[0,1]$. The mean and variance is given by
    \[
    \mu = (1-p)\left(\frac{1-q}{q} + 2\right),\qquad \sigma^2 = (1-p)\frac{1-q}{q^2} + \frac{1}{1-p}\mu^2,
    \]
    and the reverse relations are given by
    \begin{equation}\label{eq.geomtrans}
    p = 1 - \frac{1}{4}\left(\sigma^2+3\mu+\mu^2 - \sqrt{(\sigma^2+3\mu+\mu^2)^2-16\mu^2}\right), \qquad q = \left(\frac{\mu}{1-p}-1\right)^{-1}           
    \end{equation}
    The pgf is given by
    \[
    g(s) = p + \frac{(1-p)s^2 q}{1-(1-q)s}.
    \]
\end{example}

For both distributions we can uniquely transform between parametrising them by the parameters used in the definition and by the mean and variance. However, not all combination of $(\mu,\sigma^2)\in\mathbb{R}_+^2$ are possible; essentially there is a minimum variance that we can get, which depends on the mean. This is true for any discrete distribution, where we can only have a variance of zero for integer mean values, but for Examples~\ref{ex.pois} and \ref{ex.geom} there are further restrictions on the minimum variance.

For the Poisson-zero mixture model, the minimum possible variance for $\mu\in[0,2]$ is obtained for $\lambda=0$, and from \eqref{eq.poismom} and \eqref{eq.poistrans} it follows that $\mu=2(1-p)$ and $\sigma^2 = (1-p)4p$, and thus $\sigma^2 = 2\mu (1-\mu/2)$. For $\mu>2$, the minimal variance is obtained for $p=0$, yielding $\sigma^2=\mu-2$. Similarly lower bounds on the variance can be obtained for the geometric-zero mixture model. The possible mean-variance combinations for both models are shown in Figure~\ref{fig.meanvarcomb} together with the means and variances estimated directly from each generation combined for all networks in the data. Generation 14 is outside the possible regions for both models, but only one network out of the 100 actually has this many generations, so this is not practically relevant. Generation 0 on the other hand is very relevant, and this is only possible for the Poisson-zero mixture model. We return to this issue in Section~\ref{sec.modspec}. 

\begin{figure}
    \centering
    \includegraphics[width=0.7\linewidth]{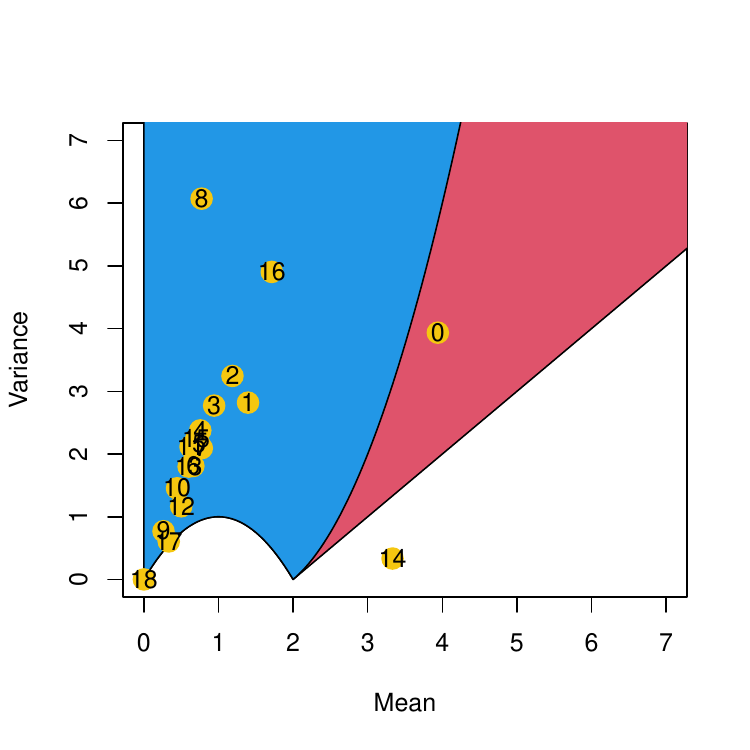}
    \caption{The possible mean-variance combinations for the Poisson-zero mixture model (red and blue) and the geometric-zero mixture model (blue). The yellow dots show the mean-variance combinations for the offspring distributions estimated for each generation in all networks in the data combined, where the number indicates the generation.
    }
    \label{fig.meanvarcomb}
\end{figure}

It should be noted that in order to keep the model simple, it is natural to keep the type of offspring distribution fixed and only let the parameters depend on the generation. However, the data may show a better fit if we use different offspring distributions for different generations, and given that our modeling framework parametrizes everything uses the mean and variance, there is no added difficulty in using varying offspring distributions.





\section{Parameter Estimation}\label{sec.mle}

For parameter estimation we will use maximum likelihood estimation. 

Assume that we have a single tree-shaped network available, say $T$, where we know all vertices and directed edges. Denote the highest generation number in $T$ by $n_{\max}$ (i.e.\ the height of the tree not counting the root), and the number of vertices in generation $n$ by $z_n$. Furthermore, denote the number of offspring of vertex $i$ in generation $n$ by $x_{n,i}$. Note that we will assume that the $x_{n,i}$ offspring of vertex $i$ are numbered $1,\ldots,x_{n,i}$, i.e.\ $T$ is an ordered tree.

Now assume that we have chosen a specific class of offspring distributions $f_n$ specified by chosen mean and variance structures $\mu_n(\phi)$ and $\sigma_n^2(\psi)$, where we concatenate the two parameters vectors into one vector $\theta = (\phi,\psi)$. Then the likelihood function can be written explicitly as
\begin{equation}\label{eq.lik}
L(\theta;T) = \prod_{n=0}^{n_{\max}}\prod_{i=1}^{z_n} f_n(x_{n,i};\theta).
\end{equation}
The explicit form permits analytical maximization only in very simple cases; more complex cases may be solved numerically, for example using Newton-Raphson.


If we have multiple networks available and wish to fit a single model based on all the networks, then, assuming that the networks are independent, we simply need to multiply multiple likelihood functions given by \eqref{eq.lik}, again resulting in a likelihood function of the same form. So in practice the only difference in estimation based on multiple networks is a larger dataset.

If we consider one of the mixture models from Example~\ref{ex.pois} or Example~\ref{ex.geom}, we then get the log likelihood
\begin{align}
\log L(\theta;T) &= \sum_{n,i: x_{n,i}=0} \log(f_n(0)) + \sum_{n,i: x_{n,i}\geq 2} \log(f_n(x_{n,i})) \nonumber\\
&= \sum_n N_{n,0} \log(p_n) + \sum_n \sum_{k\geq2} N_{n,k} \log(f_n(k)),\label{eq.lik2}
\end{align}
where $N_{n,k}$ denotes the number of generation $n$ vertices with exactly $k$ offspring and $p_n$ denotes the probability that a generation $n$ vertex has 0 offspring. It is worth noticing that while numerical maximization may require many evaluations of the log likelihood function, the primary computation time comes from counting all $N_{n,k}$; this only has to be done once, making numerical maximization very fast if implemented properly.

Inserting $f_n$ from the Poisson-zero mixture model in Example~\ref{ex.pois} into \eqref{eq.lik2} and omitting terms that do not depend on the parameters, we get 
\[
\log L(\theta;T) = \sum_n N_{n,0} \left(\log(p_n) + A_n(\log(1-p_n)-\lambda_n) + B_n\log(\lambda_n)\right)
\]
where $\lambda_n$ denotes the $\lambda$-parameter for generation $n$ and
\[
A_n = \sum_{k\geq2} N_{n,k}, \qquad B_n = \sum_{k\geq2} N_{n,k}(k-2).
\]
Similarly we get for the geometric-zero mixture model that
\[
\log L(\theta;T) = \sum_n N_{n,0} \left(\log(p_n) + A_n(\log(1-p_n)+\log(q_n)) + B_n\log(1-q_n)\right),
\]
where $q_n$ is the $q$-parameter for generation $n$. Here we should remind the reader that explicit dependence of $\log L$ on $\theta$ is given by relating $p_n$ and $\lambda_n$ or $q_n$ to $\mu_n$ and $\sigma_n^2$ using \eqref{eq.poistrans} or \eqref{eq.geomtrans}, and then relating these to $\theta=(\phi,\psi)$ using the chosen mean and variance structure \eqref{eq.meanvarstruc}.

\section{Statistical analysis of distribution networks}\label{sec.analysis}

\subsection{Data from Danish low voltage electricity distribution grids}

Next we turn to creating a IGWT model using the framework from Section~\ref{sec.models} to model a distribution network. As data we have 100 Danish networks available, three of which are shown in Figure~\ref{fig.3grids}. The full dataset can be found in \cite{dataset}.
The data contains the tree topology all the networks, and it has been cleaned in the following way:
\begin{itemize}
    \item All networks had a single edge going from the substation to the next vertex. The substation has been removed, so the root now represents the first connection to the substation. This for example means that networks can consist of just one node, which can then be interpreted as a single consumer, such as a large factory, as the only vertex connected to the substation (there are seven such networks in the dataset). 
    \item All intermediate vertices having only one outgoing edge (i.e. a junction box with just one outgoing connection) have been removed, and its parent and offspring have been joined instead. This has been done to focus on shape of the network. A consequence of this is that no vertex has exactly one offspring, hence motivating the use of the offspring distributions given in Examples~\ref{ex.pois} and \ref{ex.geom}.
\end{itemize}
Clearly, the particular data cleaning procedure affects the data. Applying a different cleaning procedure may necessitate revision of the particular model specified in Section~\ref{sec.modspec}.


\subsection{Model specification}\label{sec.modspec}

\begin{figure}
    \centering
    \includegraphics[width=0.9\linewidth]{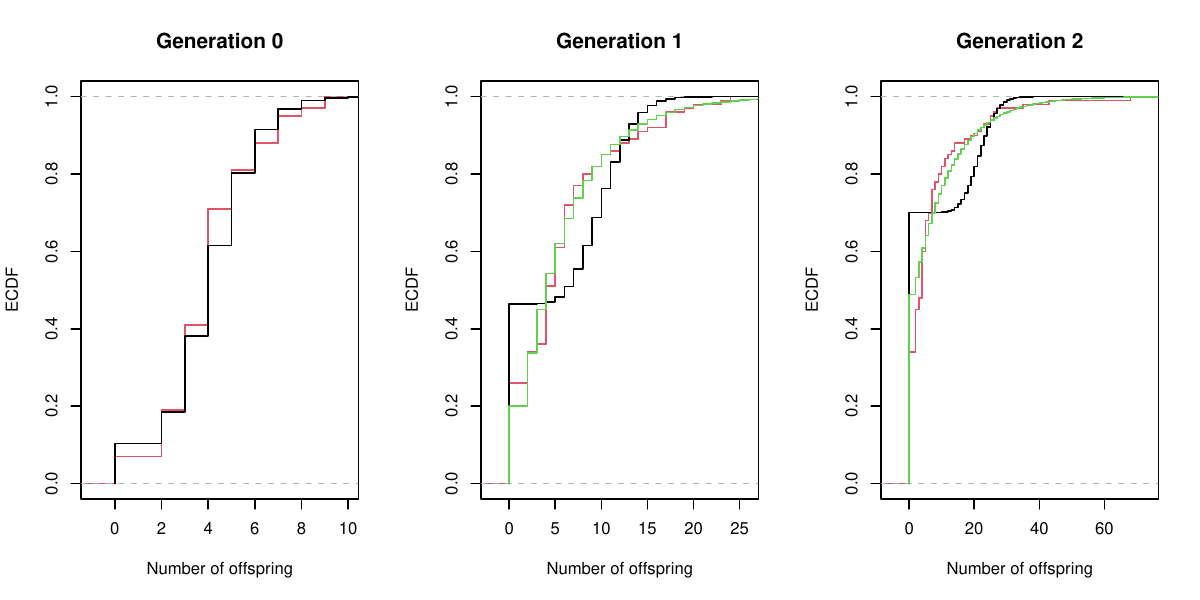}
    \caption{Empirical cumulative distribution functions for generations 0, 1, and 2 (red), along with Poisson-zero mixture distribution (black) and geometric-zero mixture distribution (green), both fitted to data by estimated mean and variance. The green curve is missing for generation 0, since no geometric-zero mixture distribution exists for the estimated mean and variance.}
    \label{fig.offdist}
\end{figure}

Firstly, we select an offspring distribution. Figure~\ref{fig.offdist} reports the empirical cumulative distribution function for the offspring distributions of generation 0, 1, and 2 in the data. Also the Poisson-zero and geometric-zero mixture distributions with parameters estimated by matching mean and variance to the data are shown in order to visualize which distribution potentially fits the best. Generation 0 is missing the curve for the geometric-zero mixture distribution, since there exists no such distribution with the combined mean and variance estimated from the data (see Section~\ref{sec.offspring} for details on this). This suggests that we use the Poisson-zero mixture distribution for generation 0. On the other hand, the geometric-zero mixture distribution seems to fit the data much better than the Poisson-zero mixture distribution for generations $n>0$, so we use the geometric-zero mixture distribution for those generations (plots for generations $n>2$ have been omitted from Figure~\ref{fig.offdist}, but show similar tendencies as $n=1$ and 2).

Secondly, we have to decide on mean and variance structures. Figure~\ref{fig.meanvaroff} shows the estimated means and variances of the offspring distributions for each generation in the data. Since only one network in the data goes beyond generation 12, we can more or less ignore the highest generations. Apart from generation 0, the means visually resembles an exponential function. 
However, by trial and error the fit is better if we add a constant, thus motivating the mean structure given by
\begin{equation*}
\mu_n(\phi) = \begin{cases} \phi_1, &\quad n=0 \\ \phi_2 \phi_3^n+\phi_4, &\quad n>0\end{cases},\qquad \phi = (\phi_1,\phi_2,\phi_3,\phi_4) \in [0,\infty)^4.
\end{equation*}
The right panel of Figure~\ref{fig.meanvaroff} resembles an exponential decrease, so we let the variance structure be given by
\[
\sigma_n^2(\psi) = \psi_1 \psi_2^n, \qquad \psi = (\psi_1,\psi_2) \in [0,\infty)^2.
\]
We have considered a number of other mean and variance structures, but the above gave the best fit by visual inspection and comparison of various summaries (such as those used for model checking in Section~\ref{sec.modcheck}). A more careful analysis, comparing various models for example using Akaike's information criterion, or other model selection criteria, could also be considered for obtaining a well-fitting model, but we will not pursue that direction in the current paper.

\subsection{Estimation and interpretation of parameters}

We estimate the parameters of the model by maximum likelihood estimation based on all 100 networks from the data.
Maximizing the likelihood function, we get the estimates
\[
\hat\phi = (3.94, 1.16, 0.654, 0.613), \qquad
\hat\psi = (3.35, 0.958).
\]
The estimated mean and variance structures are shown as black curves in Figure~\ref{fig.meanvaroff}. To ease comparison to the data, 95\% confidence intervals have also been added (the reason for the seemingly missing confidence interval in both plots at the last generation is that there are only four vertices of this generation in the data, and they happen to have the same number of offspring). The mean structure seems to fit very well with the data (disregarding the few outliers at high generations, which anyway represents data coming from just a single grid). The fit of the variance structure seems decent, although the figure reveals some discrepancies, where the fitted curve is outside the confidence intervals.
\begin{figure}
    \centering
    \includegraphics[width=0.9\linewidth]{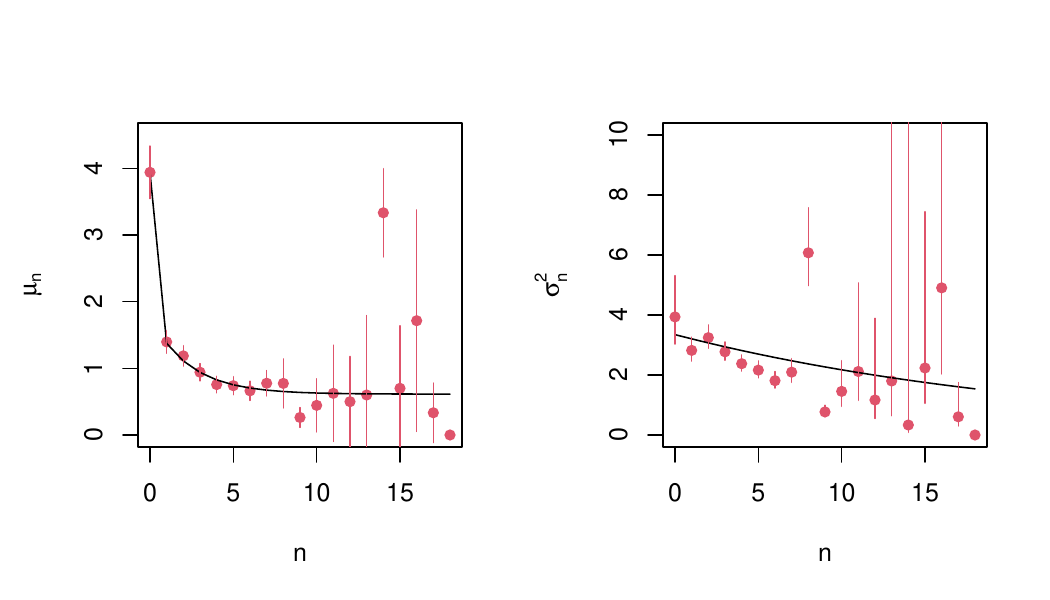}
    \caption{Estimated means and variances of the number of offspring for each generation obtained from the data (red dots), and mean and variance structures from the estimated model (black curves). The vertical red lines are the 95\% confidence intervals estimated from the data.}
    \label{fig.meanvaroff}
\end{figure}

Since the model is fairly complex, in the sense that it has six parameters, it makes sense to spend a moment interpreting these parameters.
\begin{itemize}
    \item $\phi_1$: This parameter together with $\psi_1$ give the mean and variance of the Poisson-zero mixture offspring distribution for generation 0. Using \eqref{eq.poistrans} the estimates can be translated to estimates of the probability of no offspring $\hat p_0 = 0.0751$ and the parameter used in the Poisson distribution $\hat\lambda_0 = 2.257$. Note that $\hat p_0$ is the probability of a network only containing one node while $\hat\lambda_0+2 = 4.257$ is the mean number of offspring given that there are offspring.
    \item $\phi_2$ and $\phi_3$: The parameter $\phi_3$ controls the exponential increase or decrease (ignoring the additive term $\phi_4$) of the number of offspring for each generation, while the product $\phi_2\phi_3$ gives the mean number of offspring in generation 1. Since the estimate $\hat\phi_3<1$, the number of offspring from each parent decreases with each generation. Note that in general high values of $\phi_3$ and low values of $\phi_2$ will result in high, but thin, trees, and the opposite will results in low, but wide, trees. Also note that if either $\phi_2=0$ or $\phi_3=1$, the mean structure becomes constant for $n>0$, but the estimates indicate that this is far from the case (a rigorous test based on the asymptotic distribution of the estimates could be developed, but this is beyond the scope of the paper). 
    \item $\phi_4$: This is an additive constant for the mean of all generations $n>0$. Using that the estimate $\hat\phi_4<1$ together with $\hat\phi_3<1$, we get by Proposition~\ref{prop.finite} that the estimated model produces networks that are finite with probability one.
    \item $\psi_1$ and $\psi_2$: The parameter $\psi_1$ gives the variance of the number of offspring from generation 0, while the parameter $\psi_2$ gives the rate of the exponential increase or decrease of the variance for each generation. Since the estimate $\hat\psi_2$ is below, but very close to, one, the estimated model has a rather constant variance with a slow decrease. Since $\hat\psi_2$ is so close to one, it is tempting to ask the question of whether we could simply have used a constant variance structure without changing the results significantly. However, we did try the model with a constant variance, and it turns out the results are rather different. 
\end{itemize}

In addition to the parameter estimates, we can also interpret the fact that the best fit was obtained by using a Poisson-zero mixture for generation 0 and a geometric-zero mixture model for the other generations. Essentially the Poisson distribution has a lighter tail than the geometric distribution, and also has the possibility of a mode placed away from the lowest value, so this choice produces a more stable number of generation 1 vertices. The geometric-zero mixture offspring distribution used for the other generations means that there are more leaves, but also more vertices with many offspring, than a corresponding Poisson distribution would imply. In other words, the higher generations have a more varying structure.

\subsection{Model checking}\label{sec.modcheck}

The fitted model can be used to simulate network realizations as those shown in Figure~\ref{fig.3gridssim}. A first visual comparison to the data in  Figure~\ref{fig.3grids} reveals no obvious discrepancies.
\begin{figure}
    \centering
    \includegraphics[width=0.9\linewidth]{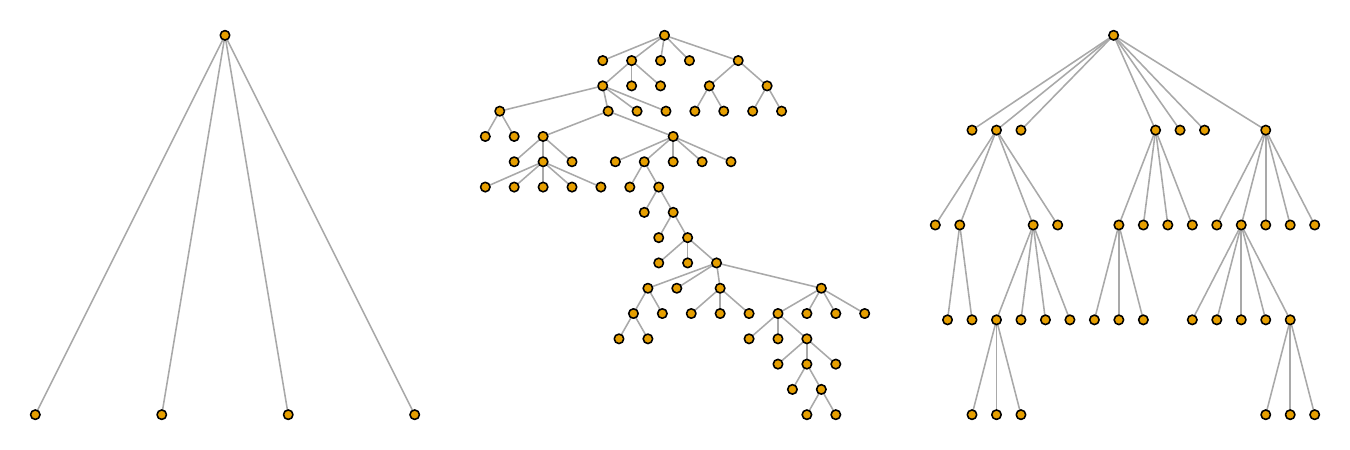}
    \caption{Three realizations from the fitted model.}
    \label{fig.3gridssim}
\end{figure}

The fit of the model can be studied  more  thoroughly by means of summary statistics. Specifically, we compare total number of vertices $Z$, height of tree $N_{\max}$ and number of leaves $\tilde Z$ calculated for the data to those obtained for the model by   simulations or by analytical calculations. For the simulation based approximations, we use 10000 runs. Means and variances for the summaries are reported in  Table~\ref{tab.summaries}.
  
In Figure~\ref{fig.totverecdf} we compare empirical cumulative distribution functions for the total number of vertices $Z$ of the model simulations and data. In the figure we have added approximate (i.e.\ based on asymptotics) point-wise 95\%-confidence intervals for the data. The functions are fairly similar, although the data has more mass at low values and a heavier tail than the simulations. This is reflected in the means and variances reported in Table~\ref{tab.summaries}. As to be expected, the analytical results agrees closely with the simulation. We observe that the model fits the mean of the data  excellently, but the variance of the model is about half of that of the data.

\begin{figure}
    \centering
    \includegraphics[width=0.4\linewidth]{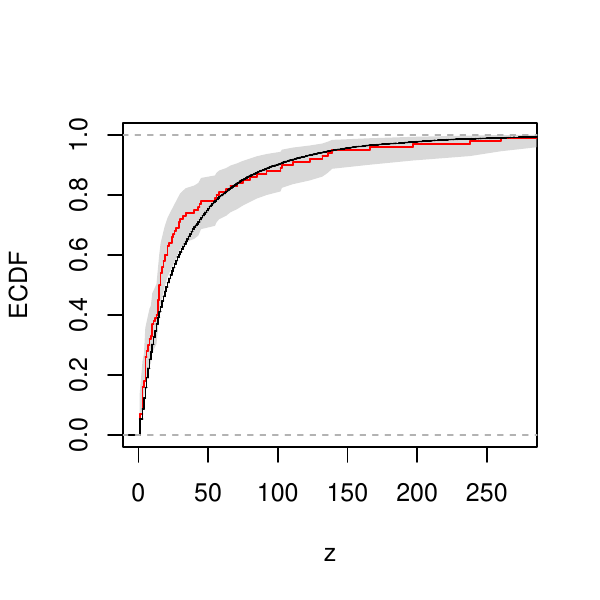}
    \caption{Empirical cumulative distribution functions of the total number of vertices in simulated networks using the fitted model (black) and networks in the data (red). The grey region shows approximate 95\% point-wise confidence intervals for the data.}
    \label{fig.totverecdf}
\end{figure}
\begin{table}[]
    \centering
    \begin{tabular}{lccc}
    \toprule
 Summary statistic   & \multicolumn{3}{c}{(Mean, Variance)} \\
 \cmidrule(lr){2-4}
     &Data  &   Simulation & Analytical  \\
    \midrule
    Total number of vertices, $Z$ &(37.7, 3171)& (37.5, 1743)& (37.2, 1783)\\[0.3ex]
    Height of tree, $N_{\max}$ & (4.50, 12.2) & (4.88, 12.0) & (4.85, 11.9)\\[0.3ex]
    Number of leaves, $\tilde Z$ & (26.5, 1756) & (26.5, 917) & (26.4, 913)\\
    \bottomrule
    \end{tabular}
    \caption{Summaries for the data and fitted model. For the model, both simulated and analytically obtained values are reported.}
    \label{tab.summaries}
\end{table}
To understand the discrepancy in more detail, we plot mean and variance of the number of nodes per generation  in   Figure~\ref{fig.genscat}.
Again, the mean fits quite nicely, while the model underestimates the variance primarily in generations 3 to 5. Given that the mean and variance of the offspring distributions fit well between the model and the data (see Figure~\ref{fig.meanvaroff}) and Proposition~\ref{prop.momn} shows that the mean and variance only depends on the mean and variance of the offspring distribution under the assumptions of the IGWT, this reveals that these assumptions do not quite fit with the data. We will analyse this discrepancy more thoroughly in Section~\ref{sec.lackoffit}.
\begin{figure}
    \centering
    \includegraphics[width=0.9\linewidth]{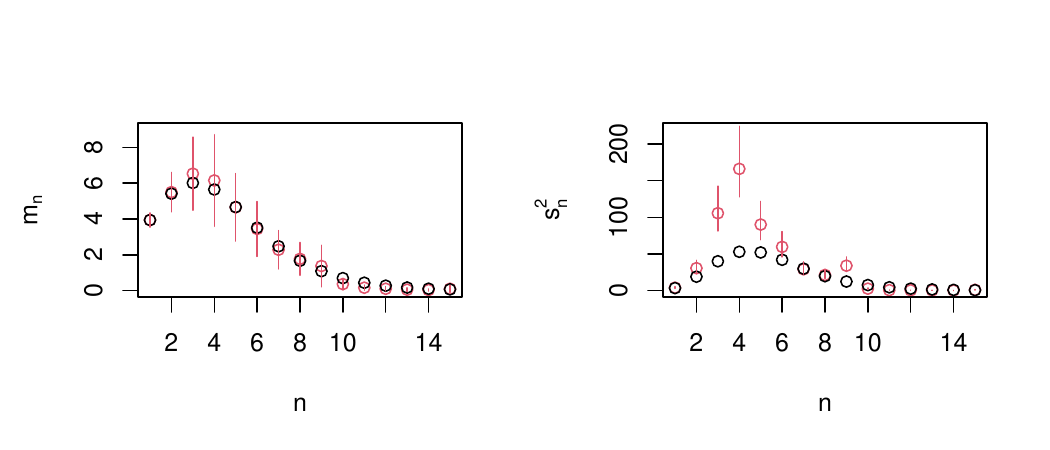}
    \caption{The mean and variance for the number of vertices in each generation 1 to 15 calculated analytically (black) and approximated from the data (red). The vertical red bars show the 95\% confidence intervals estimated from the data.}
    \label{fig.genscat}
\end{figure}

The mean and variance of tree height $N_{\max}$ are shown in   Table~\ref{tab.summaries}.
The model and the data are very similar. This is also reflected in the empirical cumulative distribution function reported in Figure~\ref{fig.totgenleavesecdf}.

\begin{figure}
    \centering
    \includegraphics[width=0.4\linewidth]{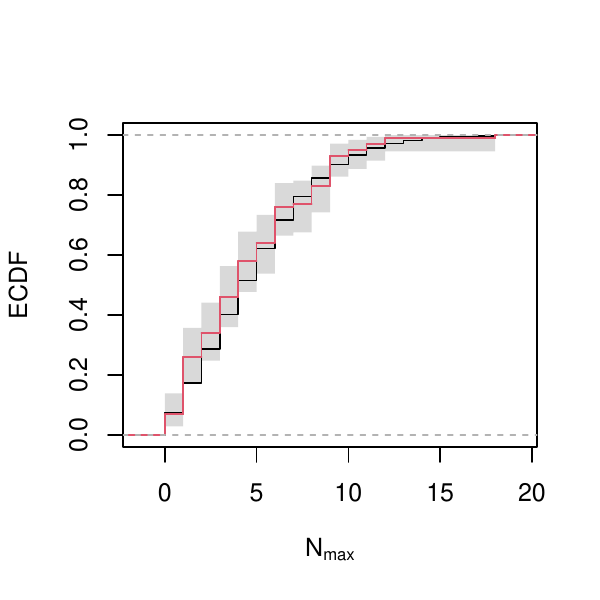}
    \includegraphics[width=0.4\linewidth]{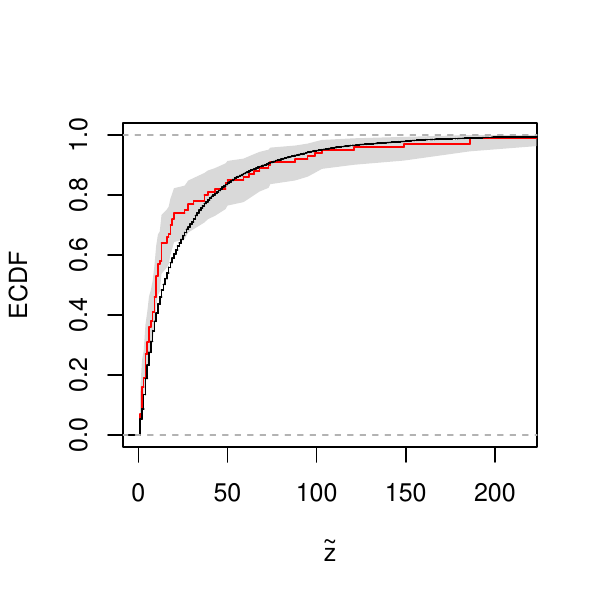}
    \caption{Left panel: The cumulative distribution function of the tree height for the estimated model calculated analytically (black), and the empirical cumulative distribution function estimated from the data (red). The grey area shows point-wise 95\%-confidence intervals for the data. Right panel: Empirical cumulative distribution functions showing the number of leaves for each generation of the simulations (black) and the data (red). The grey area shows point-wise 95\%-confidence intervals for the data.}
    \label{fig.totgenleavesecdf}
\end{figure}

Finally, we consider the number of leaves $\tilde{Z}$. Table~\ref{tab.summaries}  reveals the same problem as for the total number of vertices, i.e.\ the mean fits nicely between the simulations and the data, but the variance of the model is about half  of that of the data. This observation is in accordance with the empirical distribution functions in 
Figure~\ref{fig.totgenleavesecdf}. 




\subsection{Lack of fit of variances}\label{sec.lackoffit}

The estimated model seems to fit the data quite well, except that the variance on the number of vertices in each generation is underestimated (the other underestimated variances is a consequence of this). Since the mean and variance of the offspring distributions both fit well, the variance can only occur because the assumptions in Definition~\ref{def.igwt} are not fulfilled. The assumed independence is the likely problem, since a positive correlation between some or all of the offspring distributions could result in higher variances on the number of offspring in each generation. Real networks would tend to have many or few vertices depending on where it is located: A network located in a residental area with many houses would often have many consumers, while a network in an industrial area there might just be a few factories as consumers. This would create positive correlation between the number of vertices. 

To make a quick investigation of the potential consequences of correlation, we will consider an example where we focus on generations 1 and 2. Assume that all the offspring distributions of generation 1 have a positive correlation, say $\rho$, if they belong to the same network (but are still independent for different networks). If offspring distributions of generation 1 are thus correlated, the variance of the number of vertices in generation 2 becomes
\begin{align*}
\var\left(\sum_{i=1}^{X_0}X_{1,i}\right) 
&= \E\left(\var\left(\left.\sum_{i=1}^{X_0}X_{1,i}\right|X_0\right)\right)
+ \var\left(\E\left(\left.\sum_{i=1}^{X_0}X_{1,i}\right|X_0\right)\right)\\
&= \E\left(\sum_i\var\left(X_{1,i}|X_0\right)
+\sum_{i\not=j}\cov\left(X_{1,i},X_{1,j}|X_0\right)\right)
+ \var\left(\sum_i\E(X_{1,i}|X_0)\right)\\
&=\E(X_0\sigma_1^2 + (X_0^2-X_0)\sigma_1^2\rho)
+ \var(X_0\mu_1)\\
&=\sigma_1^2\mu_0 + (\sigma_0^2+\mu_0^2-\mu_0)\sigma_1^2\rho + \sigma_0^2\mu_1^2
\end{align*}
Comparing with $s_2^2=\sigma_0^2\mu_1^2+\sigma_1^2\mu_0$ given by Proposition~\ref{prop.momn}, we can see that the variance grows by $(\sigma_0^2+\mu_0^2-\mu_0)\sigma_1^2\rho$. To see the possible consequences of this difference on the model, we take values for $\sigma_0^2$, $\sigma_1^2$, $\mu_0$, and $\mu_1$ corresponding to the estimates from the data, and insert the extreme cases $\rho=0$ or $\rho=1$ to obtain $s_2^2=18.8$ and $s_2^2=62.5$, easily covering the variance estimate from the data given by $30.6$. We stop here and merely conclude that it is possible that the introduction of such a correlation may provide a model with a better fit of the variance, leaving a full analysis to future work.

The are of course many possibilities for introducing correlation into an IGWT. The following list gives a few examples (where the first one corresponds to the above):
\begin{enumerate}
    \item correlation within generation,
    \item correlation between offspring of common parent,
    \item dependence on parent,
    \item dependence on latent variable for the entire network.
\end{enumerate}
We shall leave it to future work to see, if any of these can provide a model that fits the data better, and maybe is able to fully capture the variation in the number of vertices. 

\section{Concluding remarks}\label{sec.con}

The IGWT with the modeling framework in Section~\ref{sec.modframe} using mean and variance structures gives a flexible model class for modeling distribution networks without loops. Obviously, such models may also be employed for other kinds of networks than electrical distribution networks, provided there are no loops. The maximum likelihood estimation method from Section~\ref{sec.mle} crucially relies on the network being completely known. If some connections are unknown, or even worse, we only know the total number of vertices in each generation, the estimation can be considered as a missing data problem, for example using the expectation-maximization algorithm, but at considerable computational increase for the estimation. On the other hand, the computational burden of the simulation algorithm is not influenced significantly by this. 

Furthermore, Section~\ref{sec.igwt} provides various theoretical results on moments for number of vertices and number of leaves, both for individual generations and for the entire tree. The results for the individual generations are typically of a form that can be calculated explicitly, while the results for the entire tree requires approximations of the infinite sums, except in very simple cases. Also results are given on the moments of the height of the tree are given in a form that allows easy numerical approximations.

Turning our attention to the practical application, the paper provides a estimated model in Section~\ref{sec.analysis} that fits most aspects of the data, but with room for improvement. In particular, one of the four suggestions given in Section~\ref{sec.lackoffit} could improve the fit by extending the IGWT.

Another possibility for extending the models for this particular data could be to exchange the use of the Poisson and geometric distributions for a negative binomial distribution (which includes the other two as special or limiting cases). This would allow a flexible way of dealing with the difference in offspring distributions in generation 0 and the other generations. The disadvantage of this approach is that using the negative binomial distribution in the same type of mixture distribution as in Examples~\ref{ex.pois} and \ref{ex.geom} results in a three parameter distribution, so it cannot be parametrized by the mean and variance structure alone. This could be handled by extending the framework with a parametric formulation for another moment, such as skewness, or the probability of 0 offspring, which of course comes at the expense of increasing the number of parameters.

Finally, the following observation in Figure~\ref{fig.meanvarcomb} motivates a modification of the modeling framework: Most of the points (except generations 0, 8, 14) are located close to a straight line, so it could be possible to model the variance structure as a linear function of the mean structure. Generation 0 would then need to have a separate parameter for its variance, while it probably is possible to ignore the lack of fit for the other two generations, since generation 8 is not far off the line, and generation 14 only occurs in one network. An explicit use of the mean-variance relation could also motivate the use of generalized linear models, since a central component of specifying such a model is exactly this relation, which is known as the variance function in this context.


\acks 
We would like to thank the anonymous Danish DSO, who provided us with data for the analysis.

\fund 
The authors gratefully acknowledge funding from the Eurostars-EUREKA Programme under the ReDistXAI project. ReDistXAI is co-funded by Innovation Fund Denmark and the European Union under grant 3150-00040B / project no. E4748, and by the German Federal Ministry of Education and Research (BMBF) under grant 01QE2419A. 

\competing 
There were no competing interests to declare which arose during the preparation or publication process of this article.

\data 
The data used in Section~\ref{sec.analysis} can be found in \cite{dataset}.

%
%
%

\bibliographystyle{APT}
\bibliography{bibliography}

\end{document}